\pdfoutput=1
\documentclass[10pt,twoside,twocolumn,letterpaper]{article}

\usepackage{times}
\usepackage[left=49.14pt,right=49.14pt,top=30.27pt,includehead,
            headheight=6pt,headsep=18pt,textheight=696pt,
            columnsep=12pt]{geometry}
\usepackage{xcolor}
\usepackage{fancyhdr}
\usepackage{caption}
\usepackage{cite}
\usepackage{amsmath,amssymb,amsfonts,amscd}
\usepackage{graphicx}
\usepackage{hyperref}
\usepackage{microtype}
\usepackage[all]{nowidow}
\usepackage{tikz}
\usetikzlibrary{arrows.meta,decorations.pathreplacing}

\definecolor{nblue}{rgb}{0,0.263,0.576}
\definecolor{subsectioncolor}{rgb}{0,0.541,0.855}

\makeatletter

\renewcommand\@maketitle{%
  \newpage
  \begin{center}%
    \vskip 0.9em%
    {\color{nblue}\sf\fontsize{24}{28}\selectfont\@title\par}%
    \vskip 1.0em\par
    {\lineskip .5em\sf\fontsize{11}{13}\selectfont\@author\par}%
  \end{center}%
  \par\vskip 1.5em}

\def\maketitle{\par
  \begingroup
    \normalfont
    \def\thefootnote{}%
    \def\footnotemark{}%
    \def\@makefnmark{}%
    \long\def\@makefntext##1{\sffamily\parindent 1em\indent##1}%
    \footnotesep 0.7\baselineskip
    \twocolumn[\@maketitle]%
    \thispagestyle{fancy}\@thanks
  \endgroup
  \setcounter{footnote}{0}%
  \global\let\thanks\relax
  \global\let\maketitle\relax
  \global\let\@maketitle\relax
  \global\let\@thanks\@empty}

\renewenvironment{abstract}{%
  \everymath={\sf}\sf\small\bfseries
  \color{subsectioncolor}\textit{Abstract}---\,\normalcolor\ignorespaces}%
  {\par\vspace{1.34ex}\normalfont\normalsize}

\newenvironment{IEEEkeywords}{%
  \everymath={\sf}\sf\small\bfseries
  \color{subsectioncolor}\textit{Index Terms}---\,\normalcolor\ignorespaces}%
  {\par\vspace{0.67ex}\normalfont\normalsize}

\renewcommand\@seccntformat[1]{\csname the#1\endcsname.\hskip 0.5em\relax}

\renewcommand\section{\@startsection{section}{1}{\z@}%
  {3.0ex plus 1.5ex minus 1.5ex}{0.7ex plus 1ex minus 0ex}%
  {\color{nblue}\centering
   \fontencoding{T1}\fontfamily{phv}\fontseries{m}\fontshape{n}%
   \fontsize{10}{12}\selectfont\scshape}}
\renewcommand\subsection{\@startsection{subsection}{2}{\z@}%
  {3.5ex plus 1.5ex minus 1.5ex}{0.7ex plus .5ex minus 0ex}%
  {\color{subsectioncolor}\normalsize\normalfont\everymath={\sf}\sf\itshape\raggedright}}
\renewcommand\subsubsection{\@startsection{subsubsection}{3}{\parindent}%
  {0ex plus 0.1ex minus 0.1ex}{0ex}%
  {\color{nblue}\normalfont\fontsize{9}{11}\selectfont\everymath={\sf}\sf\itshape}}

\newdimen\@thmtmpitemindent
\def\@begintheorem#1#2{\@thmtmpitemindent\itemindent\topsep 0pt\rmfamily\trivlist
  \item[\hskip\labelsep{\indent\itshape #1\ #2:}]\itemindent\@thmtmpitemindent}
\def\@opargbegintheorem#1#2#3{\@thmtmpitemindent\itemindent\topsep 0pt\rmfamily\trivlist
  \item[\hskip\labelsep{\indent\itshape #1\ #2\ (#3):}]\itemindent\@thmtmpitemindent}
\def\@endtheorem{\endtrivlist\unskip}

\def\QED{\mbox{\rule[0pt]{1.3ex}{1.3ex}}}

\renewcommand\refname{References}
  {\def\@noitemerr{\@latex@warning{Empty bibliography}}\endlist}

\makeatother

\DeclareCaptionLabelFormat{ieee}{#1~#2.}
\newtheorem{definition}{\bf Definition}
\newtheorem{theorem}{\bf Theorem}
\newtheorem{lemma}{\bf Lemma}
\newtheorem{proposition}{\bf Proposition}
\newtheorem{remark}{\bf Remark}

\newcommand{\eps}{\varepsilon}
\newcommand{\rann}{\operatorname{range}}

\newcommand{\enc}{\mathrm{enc}}
\begin{document}

\title{Supervisory Control under Partial Observation: Where Observation Consistency Becomes Decidable}

\author{Shaowen Miao, Jan Komenda, Tom\'{a}\v{s} Masopust, and Yiding Ji\thanks{Supported by National Natural Science Foundation of China grants 62303389 and 62373289; Guangdong Scientific Research Platform and Project Scheme grant 2024KTSCX039; Youth Science and Technology Talent Support Program of Guangdong Provincial Association for Science and Technology grant SKXRC2025463; Guangdong Provincial Key Lab of Integrated Communication, Sensing and Computation for Ubiquitous Internet of Things grant 2023B1212010007; and by RVO 67985840.
\emph{(Corresponding authors: Yiding Ji; Tom\'{a}\v{s} Masopust)}}
\thanks{S. Miao and Y. Ji are with Robotics and Autonomous Systems Thrust, The Hong Kong University of Science and Technology (Guangzhou), Guangzhou, China.
  \emph{smiao585@connect.hkust-gz.edu.cn, \linebreak jiyiding@hkust-gz.edu.cn}}
\thanks{J. Komenda is with the Institute of Mathematics of the Czech Academy of Sciences, 115 67 Prague, Czechia.
  \emph{komenda@ipm.cz}}
\thanks{T. Masopust is with the Faculty of Science, Palacky University Olomouc, Czechia. \emph{tomas.masopust@upol.cz}}
}

\maketitle

\begin{abstract}
    Observation consistency (OC) and modified observation consistency (MOC) are structural conditions used in hierarchical and modular supervisory control under partial observation. Their verification for languages generated by deterministic finite automata is PSPACE-hard, whereas decidability was open. We answer this question by showing that both problems are undecidable, that is, there are no algorithms verifying OC or MOC. On the positive side, we identify a decidable class defined by a restriction on the plant: if every cycle of the automaton contains a transition labeled by an observable high-level event, then verification of both conditions is PSPACE-complete.
\end{abstract}

\begin{IEEEkeywords}
Discrete-event systems, partial observation, observation consistency, rational relations.
\end{IEEEkeywords}

\section{Introduction}
    Supervisory control of large discrete-event systems (DESs) is often limited by the size of the plant model. Hierarchical control addresses this limitation by synthesizing the supervisor on a smaller high-level abstraction of the plant, and by implementing it on the original, low-level plant. For this to work, properties established at the high level must transfer to the low level.

    Under partial observation, a specification admits a supervisor if and only if it is controllable and observable. Observability is not preserved under union, and hence a specification need not have a supremal observable sublanguage. Normality is a stronger condition that is preserved under union and therefore admits a supremal element, which makes it the property of choice in synthesis~\cite{wonham2019supervisory,cassandras2021introduction}. Transferring observability or normality between the two levels requires that the abstraction be compatible with the observations available at the low level.

    Boutin et al.~\cite{boutin2011hierarchical} introduced \emph{observation consistency} (OC) as such a compatibility condition between a plant and its abstraction, and showed that it is sufficient to preserve observability under abstraction. Informally, OC requires that every two high-level strings that the high-level observer cannot distinguish have low-level realizations that the low-level observer cannot distinguish either. Komenda and Masopust~\cite{komenda2023hierarchical} studied the analogous problem for normality and observed that OC is insufficient for that purpose: normality of the high-level supervisor transfers to the low level only if the low-level realization of one of the two strings may be prescribed in advance. They therefore introduced \emph{modified observation consistency} (MOC), which differs from OC in the order of the quantifiers: for \emph{every} low-level realization of the first string there must be a matching realization of the second one. Under MOC, the computation of supremal normal sublanguages commutes with the abstraction, which yields maximal permissiveness with respect to the low-level plant.

    Both conditions have since been used beyond their original setting. Miao et al.~\cite{miao2026hierarchical} replaced projections by the set-valued observations that arise from communication delays and losses as well as from sensor attacks, and introduced the corresponding conditions nondeterministic OC (NOC) and MNOC, thereby extending hierarchical synthesis to networked and cyber-attacked DESs. In modular systems, where the monolithic plant may be exponentially larger than each local component, MOC guarantees that the global nonblocking and maximally permissive normal supervisor is the parallel composition of the local normal supervisors~\cite{komenda2023modular}. The same condition transfers $(L,P)$-normality to local projections, which yields a modular computation of supremal $(L,P)$-normal sublanguages and modular enforcement of critical observability and non-weak opacity~\cite{miao2026modularcowo}, as well as of diagnosability and prognosability~\cite{hu2025active}.

    All of these applications share a common obstacle. Since no algorithm verifying OC/MOC was known, they resort to sufficient conditions that are easy to check; namely, alphabetic restrictions~\cite{komenda2023hierarchical,komenda2023modular}. Those conditions are strictly stronger than OC/MOC. What was known about verification is that it is PSPACE-hard~\cite{komenda2023hierarchical}, which leaves open whether a verification algorithm exists at all.

    We show that no such algorithm exists: verification of OC and of MOC is undecidable. The proofs proceed through rational relations, whose universality problem is undecidable~\cite{fischer1968multitape,muscholl2019facets}. Given an arbitrary rational relation, we construct a DFA such that the generated language is MOC (respectively OC) if and only if the relation is universal. The constructions use five events, two of them observable and three high-level, showing that undecidability does not rest on large alphabets.

    These results are negative, and they raise the question of what remains verifiable. We discuss a restriction on the plant, namely, the plants in which the two projections synchronize after a bounded number of events, and show that verification of both conditions is then PSPACE-complete.

    Section~\ref{sec:preliminaries} recalls the required notions. Section~\ref{sec:characterizations} characterizes both conditions by sets of low-level observations and by an inclusion of relations. Section~\ref{sec:verification-complexity} contains the two reductions. Section~\ref{sec:decidable} presents the decidable class.

\section{Preliminaries}\label{sec:preliminaries}
    An alphabet $\Sigma$ is a finite set of events. Let $\Sigma^*$ denote the free monoid of all finite strings over $\Sigma$, including the empty string $\eps$. A language is a subset of $\Sigma^*$. The prefix closure of a language $L$ is the set $\overline L=\{u\in\Sigma^*\mid uv\in L\text{ for some }v\in\Sigma^*\}$. A projection $\pi\colon\Sigma^*\to\Gamma^*$, for $\Gamma\subseteq\Sigma$, is the morphism for concatenation such that $\pi(a)=a$ for $a\in\Gamma$ and $\pi(a)=\eps$ otherwise. We write $\Gamma^+$ for the set of nonempty strings over~$\Gamma$.

    A \emph{(decision) problem} is a language $A$, where $x\in A$ means that $x$ is a yes-instance. The problem $A$ is \emph{decidable} if there is an algorithm that halts on every input and accepts exactly the yes-instances; otherwise, $A$ is \emph{undecidable}. A \emph{reduction} from $A$ to $B$ is a computable function $f$ such that $x\in A$ if and only if $f(x)\in B$; if $B$ is decidable and $A$ reduces to $B$, then $A$ is decidable as well. A \emph{polynomial-space} algorithm is an algorithm whose working space is bounded by a polynomial in the size of the input, and PSPACE is the class of problems admitting such an algorithm. A problem is PSPACE-hard if every problem from PSPACE reduces to it by a polynomial-time reduction. If a PSPACE-hard problem belongs to PSPACE, we talk about a PSPACE-complete problem.

    A deterministic finite automaton (DFA) is a tuple $G=(X,\Sigma,\delta,x_0,X_m)$, where $X$ is a finite set of states, $\Sigma$ is an alphabet, $x_0\in X$ is the initial state, $X_m$ is the set of marked states, and $\delta\colon X\times\Sigma\to X$ is a partial transition function, extended to strings in the usual way. The generated language of $G$ is the set $L(G)=\{s\in\Sigma^*\mid \delta(x_0,s)\in X\}$, and its marked language is the set $L_m(G)=\{s\in L(G)\mid \delta(x_0,s)\in X_m\}$. Thus, $L(G)$ is prefix-closed, and $L(G)=L_m(G)$ if every reachable state is marked. A \emph{nondeterministic finite automaton} (NFA) differs from a DFA in that $\delta$ maps a state and an event to a set of states; its accepted language is denoted by $\mathcal L(\cdot)$. 

    Let $\Sigma_o,\Sigma_{\mathrm{hi}}\subseteq\Sigma$ be the observable and high-level alphabets. We use the projections
    $P\colon\Sigma^*\to\Sigma_o^*$ and $Q\colon\Sigma^*\to\Sigma_{\mathrm{hi}}^*$. Set $I=\Sigma_o\cap\Sigma_{\mathrm{hi}}$, and let $Q_o\colon\Sigma_o^*\to I^*$ and $P_{\mathrm{hi}}\colon\Sigma_{\mathrm{hi}}^*\to I^*$ be the corresponding restrictions, see Fig.~\ref{fig01}. 
    \begin{figure}
        \centering
        \[
        \begin{CD}
            \Sigma^* @>Q>> \Sigma_{\mathrm{hi}}^* \\
            @VPVV @VVP_{\mathrm{hi}}V \\
            \Sigma_o^* @>Q_o>> I^*
        \end{CD}
        \]
        \caption{Commuting projections used throughout the paper.}
        \label{fig01}
    \end{figure}
    The projections satisfy
    \begin{equation}\label{eq:commuting-projections}
        Q_oP=P_{\mathrm{hi}}Q.
    \end{equation}

    We recall OC and MOC in the form used below~\cite{komenda2023hierarchical}.

\begin{definition}\label{def:oc}
    A prefix-closed language $L\subseteq\Sigma^*$ is \emph{observation consistent (OC)} with respect to $Q$, $P$, and $P_{\mathrm{hi}}$ if, for all $t,t'\in Q(L)$ such that $P_{\mathrm{hi}}(t)=P_{\mathrm{hi}}(t')$, there are $s,s'\in L$ such that $Q(s)=t$, $Q(s')=t'$, and $P(s)=P(s')$.
\end{definition}

\begin{definition}\label{def:moc}
    A prefix-closed language $L\subseteq\Sigma^*$ is \emph{modified observation consistent (MOC)} with respect to $Q$, $P$, and $P_{\mathrm{hi}}$ if, for every $s\in L$ and every $t'\in Q(L)$ such that $P_{\mathrm{hi}}(Q(s))=P_{\mathrm{hi}}(t')$, there is $s'\in L$ such that $Q(s')=t'$ and $P(s')=P(s)$.
\end{definition}

    The difference is the order of the quantifiers. OC may choose both low-level representatives, whereas MOC must preserve the observation of the given string $s$, see Fig.~\ref{fig:OC_MOC} for illustration.

\begin{figure}
    \begin{center}
        \includegraphics[width=\linewidth]{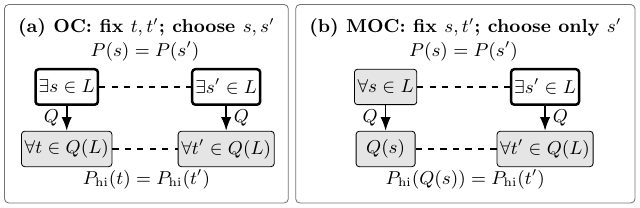}
        \caption{Comparison of OC and MOC. OC selects both realizations $s,s'$ after $t,t'$ are fixed, whereas MOC fixes $s$ in advance and selects only the matching realization $s'$.}\label{fig:OC_MOC}
    \end{center}
\end{figure}

    It is known that MOC implies OC~\cite[Lemma~10]{komenda2023hierarchical}.

\subsection{Rational Relations}
    Let $X$ and $Y$ be alphabets. A relation $R\subseteq X^*\times Y^*$ is \emph{rational} if there are a finite alphabet $C$, a regular language $H\subseteq C^*$, and morphisms $f\colon C^*\to X^*$ and $g\colon C^*\to Y^*$ s.t.
    \begin{equation}\label{eq:nivat-form}
        R=\{(f(k),g(k))\mid k\in H\} \,.
    \end{equation}

    Rational relations are rational subsets of the monoid $X^*\times Y^*$. Consequently, they are closed under union, concatenation, and Kleene star, and the Cartesian product $K\times M$ of two regular languages $K\subseteq X^*$ and $M\subseteq Y^*$ is rational~\cite[Chapter~III]{berstel1979transductions},~\cite[Chapter~IV]{sakarovitch2009elements}.

     If $X$ and $Y$ are disjoint, every rational relation $R$ admits a representation by a regular language $K_R\subseteq(X\cup Y)^*$ s.t.
    \begin{equation}\label{eq:synchronization-form}
        R=\{(\pi_X(w),\pi_Y(w))\mid w\in K_R\},
    \end{equation}
    where $\pi_X \colon (X\cup Y)^* \to X^*$ and $\pi_Y \colon (X\cup Y)^* \to Y^*$ are projections. The two representations can be effectively computed from each other~\cite[Chapter~III]{berstel1979transductions},~\cite[Chapter~IV]{sakarovitch2009elements}.

    For $R\subseteq X^*\times Y^*$, let $\rann(R)=\{y\in Y^*\mid (x,y)\in R\text{ for some }x\in X^*\}$. The relation is \emph{universal} if $R=X^*\times Y^*$. We write $A\mathbin{\dot\cup}B$ for union of disjoint sets $A$ and $B$.

    Given a rational relation by one of the representations above, it is undecidable whether the relation is universal~\cite{fischer1968multitape,muscholl2019facets}.

    The disjointness of $X$ and $Y$ required by~\eqref{eq:synchronization-form} is no restriction: renaming the alphabets by bijections preserves both rationality and universality, and hence we may always pass to disjoint renamed copies.

\section{Characterizations of the Conditions}\label{sec:characterizations}
    For every $t\in Q(L)$, define
    \begin{equation}\label{eq:observation-set}
        \mathcal O_L(t)=\{P(s)\mid s\in L,\ Q(s)=t\}
    \end{equation}
    to be the set of low-level observations of strings whose high-level projection is $t$.

\begin{proposition}\label{prop:observation-set-characterizations}
    Let $L\subseteq\Sigma^*$ be prefix-closed. Then:
    \begin{enumerate}
    \item $L$ is MOC if and only if, for all $t,t'\in Q(L)$,
    \begin{equation}\label{eq:moc-observation-sets}
        P_{\mathrm{hi}}(t)=P_{\mathrm{hi}}(t')
        \quad\Longrightarrow\quad
        \mathcal O_L(t)=\mathcal O_L(t')\,.
    \end{equation}
    \item $L$ is OC if and only if, for all $t,t'\in Q(L)$,
    \begin{equation}\label{eq:oc-observation-sets}
        P_{\mathrm{hi}}(t)=P_{\mathrm{hi}}(t')
        \quad\Longrightarrow\quad
        \mathcal O_L(t)\cap\mathcal O_L(t')\ne\emptyset.
    \end{equation}
    \end{enumerate}
\end{proposition}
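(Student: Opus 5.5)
Both parts follow by unfolding Definitions~\ref{def:oc} and~\ref{def:moc} in terms of the sets $\mathcal O_L(t)$ from~\eqref{eq:observation-set}. The key observation is that, for $t\in Q(L)$, a string $w\in\Sigma_o^*$ lies in $\mathcal O_L(t)$ exactly when some $s\in L$ satisfies $Q(s)=t$ and $P(s)=w$. No earlier result beyond this reformulation is needed. Prefix-closedness plays no role beyond being part of the standing hypotheses.

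For part~1, suppose first that $L$ is MOC. Let $t,t'\in Q(L)$ with $P_{\mathrm{hi}}(t)=P_{\mathrm{hi}}(t')$, and take $w\in\mathcal O_L(t)$, witnessed by some $s\in L$ with $Q(s)=t$ and $P(s)=w$. Then $P_{\mathrm{hi}}(Q(s))=P_{\mathrm{hi}}(t')$. By MOC there is $s'\in L$ with $Q(s')=t'$ and $P(s')=w$, so $w\in\mathcal O_L(t')$. This gives $\mathcal O_L(t)\subseteq\mathcal O_L(t')$. The hypothesis is symmetric in $t$ and $t'$, so exchanging their roles yields equality. Conversely, assume~\eqref{eq:moc-observation-sets}. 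Take $s\in L$ and $t'\in Q(L)$ with $P_{\mathrm{hi}}(Q(s))=P_{\mathrm{hi}}(t')$, and put $t=Q(s)\in Q(L)$. Then $P(s)\in\mathcal O_L(t)=\mathcal O_L(t')$, so some $s'\in L$ has $Q(s')=t'$ and $P(s')=P(s)$, which is exactly MOC.

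For part~2, the argument is the same with intersection in place of equality. The OC condition for a pair $t,t'$ asks for $s,s'\in L$ with $Q(s)=t$, $Q(s')=t'$ and $P(s)=P(s')$. Setting $w=P(s)$, this says precisely that $w\in\mathcal O_L(t)\cap\mathcal O_L(t')$. Conversely, any such $w$ yields witnesses $s,s'$ by the definition of the sets. Hence, for each pair $t,t'$, the OC requirement is equivalent to nonemptiness of the intersection, and quantifying over all pairs with $P_{\mathrm{hi}}(t)=P_{\mathrm{hi}}(t')$ gives~\eqref{eq:oc-observation-sets}.

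No step is a real obstacle. The only point needing care is in part~1: MOC by itself gives only the inclusion $\mathcal O_L(t)\subseteq\mathcal O_L(t')$, and equality is obtained by applying it again with $t$ and $t'$ interchanged, which the symmetric hypothesis permits.
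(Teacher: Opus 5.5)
Your proof is correct and follows the paper's argument essentially verbatim: for part~1 you derive $\mathcal O_L(t)\subseteq\mathcal O_L(t')$ from MOC and symmetrize, then for the converse you use $P(s)\in\mathcal O_L(Q(s))=\mathcal O_L(t')$. For part~2 you identify a common observation $P(s)=P(s')$ with an element of $\mathcal O_L(t)\cap\mathcal O_L(t')$, which is exactly the paper's route.
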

\hspace{0pt}

\begin{proof}
    Assume that $L$ is MOC, and let $t,t'\in Q(L)$ have the same $P_{\mathrm{hi}}$-projection. If $u\in\mathcal O_L(t)$, there is $s\in L$ such that $Q(s)=t$ and $P(s)=u$, and MOC gives $s'\in L$ with $Q(s')=t'$ and $P(s')=u$, that is, $\mathcal O_L(t)\subseteq\mathcal O_L(t')$. Interchanging $t$ and $t'$ gives $\mathcal O_L(t) = \mathcal O_L(t')$.
    Conversely, assume~\eqref{eq:moc-observation-sets}, and let $s\in L$ and $t'\in Q(L)$ satisfy $P_{\mathrm{hi}}(Q(s))=P_{\mathrm{hi}}(t')$. Then, $P(s)\in\mathcal O_L(Q(s))=\mathcal O_L(t')$ and the definition of $\mathcal O_L(t')$ gives $s'\in L$ with $Q(s')=t'$ and $P(s')=P(s)$. Thus, $L$ is MOC.

    If $L$ is OC, then for every $t,t'\in Q(L)$ with $P_{\mathrm{hi}}(t)=P_{\mathrm{hi}}(t')$, there are $s,s'\in L$ with $Q(s)=t$, $Q(s')=t'$, and $P(s)=P(s')$, and hence $P(s) \in \mathcal O_L(t)\cap\mathcal O_L(t')$. Conversely, assume~\eqref{eq:oc-observation-sets}. If $u\in\mathcal O_L(t)\cap\mathcal O_L(t')$, there are $s,s'\in L$ with $Q(s)=t$, $Q(s')=t'$, and $P(s)=u=P(s')$, and therefore $L$ is OC.
\end{proof}

\begin{remark}\label{rem:special-case}
    If $\Sigma_o \cap \Sigma_{\mathrm{hi}} = \emptyset$, then $P_{\mathrm{hi}}(t) = P_{\mathrm{hi}}(t')$ for all $t,t'\in Q(L)$, and the conditions of Proposition~\ref{prop:observation-set-characterizations} read
    (i) $L$ is MOC if and only if $\mathcal O_L(t) = \mathcal O_L(t')$ for all $t,t' \in Q(L)$, and (ii) $L$ is OC if and only if $\mathcal O_L(t) \cap \mathcal O_L(t') \neq \emptyset$ for all $t,t' \in Q(L)$.
\end{remark}

    For a language $L\subseteq\Sigma^*$ and projections $P$ and $Q$, we define the relation
    \begin{equation}\label{eq:induced-projection-relation}
        \mathcal R_L=\{(P(s),Q(s))\mid s\in L\}
    \end{equation}
    and the set of compatible pairs as
    \[
        \mathcal C_L = \{(u,v)\in P(L)\times Q(L)\mid Q_o(u)=P_{\mathrm{hi}}(v)\}.
    \]
    If $L$ is regular, then $\mathcal R_L$ is rational by~\eqref{eq:nivat-form}. By~\eqref{eq:commuting-projections}, $\mathcal R_L\subseteq\mathcal C_L$, and we show that the other inclusion characterizes MOC.

\begin{proposition}\label{prop:moc-relational}
    A prefix-closed language $L$ is MOC if and only if $\mathcal C_L \subseteq \mathcal R_L$.
\end{proposition}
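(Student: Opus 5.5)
We prove both implications directly from Definition~\ref{def:moc}, using the commuting identity~\eqref{eq:commuting-projections} to translate the compatibility condition $Q_o(u)=P_{\mathrm{hi}}(v)$ into the hypothesis $P_{\mathrm{hi}}(Q(s))=P_{\mathrm{hi}}(t')$ required by MOC. The inclusion $\mathcal R_L\subseteq\mathcal C_L$ is already noted, so the statement is equivalent to $\mathcal C_L=\mathcal R_L$.

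For the direction ``MOC implies $\mathcal C_L\subseteq\mathcal R_L$'', we take $(u,v)\in\mathcal C_L$. Then $u\in P(L)$, so $u=P(s)$ for some $s\in L$, and $v\in Q(L)$. By~\eqref{eq:commuting-projections}, $P_{\mathrm{hi}}(Q(s))=Q_o(P(s))=Q_o(u)=P_{\mathrm{hi}}(v)$. Applying MOC to $s$ and $t'=v$ yields $s'\in L$ with $Q(s')=v$ and $P(s')=P(s)=u$. Hence $(u,v)=(P(s'),Q(s'))\in\mathcal R_L$.

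For the converse, we assume $\mathcal C_L\subseteq\mathcal R_L$ and take $s\in L$ and $t'\in Q(L)$ with $P_{\mathrm{hi}}(Q(s))=P_{\mathrm{hi}}(t')$. Put $u=P(s)\in P(L)$. Then $Q_o(u)=P_{\mathrm{hi}}(Q(s))=P_{\mathrm{hi}}(t')$ by~\eqref{eq:commuting-projections}, so $(u,t')\in\mathcal C_L\subseteq\mathcal R_L$. This gives $s'\in L$ with $P(s')=u=P(s)$ and $Q(s')=t'$, which is exactly the MOC condition.

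No step is really an obstacle. The only point needing care is that the witness $s$ for $u\in P(L)$ is arbitrary, while MOC quantifies over every $s$. This is harmless in both directions: in the first, any witness works because MOC applies to all $s$; in the second, we start from a given $s$. Prefix-closedness is not used beyond its appearance in the definition.
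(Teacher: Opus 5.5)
Your proof is correct and follows the paper's argument essentially step for step. In both directions you use~\eqref{eq:commuting-projections} to turn the compatibility condition $Q_o(u)=P_{\mathrm{hi}}(v)$ into the MOC hypothesis $P_{\mathrm{hi}}(Q(s))=P_{\mathrm{hi}}(t')$: forward you pick a witness $s$ with $P(s)=u$ and apply MOC with $t'=v\in Q(L)$, and conversely you note that $(P(s),t')\in\mathcal C_L\subseteq\mathcal R_L$.
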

\begin{proof}
    Assume that $L$ is MOC, and let $(u,v)\in\mathcal C_L$. Choose $s\in L$ with $P(s)=u$. Then $P_{\mathrm{hi}}(Q(s)) = Q_o(P(s)) = Q_o(u) = P_{\mathrm{hi}}(v)$. By MOC, there is $s'\in L$ with $P(s')=u$ and $Q(s')=v$, that is, $(u,v)\in\mathcal R_L$, and hence $\mathcal C_L\subseteq\mathcal R_L$.

    Conversely, let $s\in L$ and $t'\in Q(L)$ with $P_{\mathrm{hi}}(Q(s))=P_{\mathrm{hi}}(t')$. Since $P_{\mathrm{hi}}(Q(s)) = Q_o(P(s)) = P_{\mathrm{hi}}(t')$, $(P(s),t')\in\mathcal C_L \subseteq \mathcal R_L$, and therefore there is $s'\in L$ such that $P(s')=P(s)$ and $Q(s')=t'$, that is, $L$ is MOC.
\end{proof}

\section{Main Results}\label{sec:verification-complexity}
    To prove the results, we state two elementary observations.

\begin{lemma}\label{lem:dfa-realization}
    Let $K\subseteq \Omega^*$ be a regular language, and let $a \notin \Omega$. Then $L = \Omega^*\cup K a$ is prefix-closed and regular.
\end{lemma}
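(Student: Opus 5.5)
The plan is to verify the two claims separately. Throughout, $L$ is understood as a language over the alphabet $\Omega\cup\{a\}$.

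\emph{Regularity.} $\Omega^*$ is regular, and $K$ is regular by assumption. The singleton $\{a\}$ is regular. Regular languages are closed under concatenation and union, so $Ka$ and then $L=\Omega^*\cup Ka$ are regular. If a DFA is wanted, as the later reductions will need, we can build one explicitly. Start from a complete DFA $A$ for $K$ over $\Omega$ and make every state of it marked and part of the generated language, so that it generates $\Omega^*$. Add a fresh state $q_a$ together with an $a$-transition into $q_a$ from each state of $A$ that was accepting for $K$. The state $q_a$ has no outgoing transitions. The generated language of this DFA is exactly $\Omega^*\cup Ka$, because $a$ can occur only once and only after a string of $K$.

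\emph{Prefix closure.} Let $w\in L$ and let $u$ be a prefix of $w$. If $w\in\Omega^*$, then $u\in\Omega^*\subseteq L$. Otherwise $w=ka$ with $k\in K\subseteq\Omega^*$. Since $a\notin\Omega$, the prefixes of $ka$ are the prefixes of $k$, which lie in $\Omega^*\subseteq L$, together with $ka$ itself, which lies in $L$. Hence $\overline L=L$. The only point needing care is the hypothesis $a\notin\Omega$. It ensures that a proper prefix of $ka$ cannot end in $a$, so every such prefix falls into $\Omega^*$. There is no genuine obstacle here.
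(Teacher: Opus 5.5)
Your proposal is correct and follows essentially the same route as the paper. It derives prefix-closure from $a\notin\Omega$ and regularity from closure under concatenation and union. It also uses the same DFA construction: complete the DFA for $K$, mark all states, and add a fresh state entered by $a$ from the states that were accepting for $K$.
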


\begin{proof}
    Every proper prefix of a string of $Ka$ belongs to $\Omega^*$; regularity follows from closure under concatenation and union. A DFA for $L$ can be constructed from the DFA for $K$ in polynomial time. Indeed, we first complete the DFA for $K$ by adding a new sink state, and then add a new state $x$ together with $a$-transitions to it from every marked state. Finally, we mark all states.
\end{proof}

    The next construction combines standard closure properties of rational relations with a block encoding. It reduces every instance of the universality problem to an instance over two-letter alphabets, which is what makes the alphabets of the reductions below fixed and small.

\begin{lemma}\label{lem:relation-extension}
    Let $R\subseteq X^*\times Y^*$ be rational, and let $A=\{a_1,a_2\}$ and $B=\{b_1,b_2\}$ be alphabets. There is an effectively constructible rational relation $S\subseteq A^*\times B^*$ such that $\rann(S)=B^*$, and $S$ is universal if and only if $R$ is universal.
\end{lemma}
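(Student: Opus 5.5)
We encode $R$ into a relation over two-letter alphabets by block codes, and then add everything outside the image of the encoding. Write $X=\{x_1,\dots,x_m\}$ and $Y=\{y_1,\dots,y_n\}$. Define injective morphisms $\alpha\colon X^*\to A^*$ by $\alpha(x_i)=a_1^{i}a_2$ and $\beta\colon Y^*\to B^*$ by $\beta(y_j)=b_1^{j}b_2$. These are prefix codes, so $\alpha$ and $\beta$ are injective and their images $\alpha(X^*)$ and $\beta(Y^*)$ are regular. Put
\[
    S=(\alpha\times\beta)(R)\ \cup\ \bigl(A^*\times B^*\setminus \alpha(X^*)\times\beta(Y^*)\bigr).
\]

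We first check that $S$ is rational and effectively constructible. For the first part, if $R$ is given in the form~\eqref{eq:nivat-form} by $(C,H,f,g)$, then $(\alpha\times\beta)(R)$ is given by $(C,H,\alpha\circ f,\beta\circ g)$. For the second part, write the complement of $\alpha(X^*)\times\beta(Y^*)$ in $A^*\times B^*$ as
\[
    \bigl((A^*\setminus\alpha(X^*))\times B^*\bigr)\cup\bigl(A^*\times(B^*\setminus\beta(Y^*))\bigr).
\]
This is a union of two Cartesian products of regular languages, since regular languages are closed under complement. Closure of rational relations under union then gives rationality of $S$, and every step is effective.

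Next, $\rann(S)=B^*$. For any $y\in B^*$, the string $a_1\notin\alpha(X^*)$, because every nonempty codeword ends in $a_2$. Hence $(a_1,y)$ lies in the second part of $S$.

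Finally, we prove the equivalence. Suppose $R$ is universal. Then $(\alpha\times\beta)(R)=\alpha(X^*)\times\beta(Y^*)$, so $S=A^*\times B^*$. Conversely, suppose $S$ is universal, and let $(x,y)\in X^*\times Y^*$. Then $(\alpha(x),\beta(y))\in S$. This pair does not lie in the complement part, so it lies in $(\alpha\times\beta)(R)$. Thus $(\alpha(x),\beta(y))=(\alpha(x'),\beta(y'))$ for some $(x',y')\in R$, and injectivity of $\alpha$ and $\beta$ gives $(x,y)=(x',y')\in R$.

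The only delicate point is making the complement part rational. A relation-level complement would not be rational in general. This is avoided because the complement here is taken of a product of regular languages, which reduces to complements of regular languages. Injectivity of the codes is the other essential ingredient, and it is guaranteed by the prefix-code choice.
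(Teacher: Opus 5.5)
Your proof is correct and follows essentially the same route as the paper: encode $R$ by injective morphisms into the two-letter alphabets, and add the complement of the image of the encoding as a union of two Cartesian products of regular languages. The only difference is the choice of code. The paper uses fixed-length block codes preceded by the marker $a_1$ (respectively $b_1$), so that $\varepsilon$ lies outside the image and witnesses $\rann(S)=B^*$. Your prefix code $a_1^ia_2$ needs no marker, because the single letter $a_1\notin\alpha(X^*)$ plays that role.
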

\begin{proof}
    Choose $m\ge 2$ such that $2^m\ge\max\{|X|,|Y|\}$, and let $h\colon X^*\to A^*$ and $h'\colon Y^*\to B^*$ be morphisms that map every event to a string of length $m$ and are injective on events. Then $h$ and $h'$ are injective, and $h(X^*)$ and $h'(Y^*)$ are regular. Let
    \[
        D_A=a_1h(X^*),\qquad D_B=b_1h'(Y^*),
    \]
    which are regular as well, and define
    \begin{align*}
        S={}& \{(a_1 h(x),b_1 h'(y)) \in D_A \times D_B \mid (x,y)\in R\}\\
        &\cup\ (A^*\setminus D_A)\times B^*\\
        &\cup\ A^*\times(B^*\setminus D_B)\,.
    \end{align*}
    The relation $S'=\{(h(x),h'(y))\mid (x,y)\in R\}$ is rational, because writing $R$ in the form~\eqref{eq:nivat-form} exhibits $S'$ in the form~\eqref{eq:nivat-form} with the morphisms $h\circ f$ and $h'\circ g$. The first term of $S$ is the concatenation of a finite relation with $S'$; in the other two terms, both factors are regular languages, and their Cartesian product is rational. Since rational relations are closed under concatenation and union, $S$ is rational~\cite[Chapter~III]{berstel1979transductions},~\cite[Chapter~IV]{sakarovitch2009elements}.

    Every string of $D_A$ is nonempty, and therefore $\eps\notin D_A$. The second term then gives $(\eps,v)\in S$ for every $v\in B^*$, and hence $\rann(S)=B^*$.

    Finally, the last two terms of $S$ contain every pair outside $D_A\times D_B$. Every string of $D_A$ is $a_1h(x)$ for a unique $x\in X^*$, because $h$ is injective, and similarly every string of $D_B$ is $b_1h'(y)$ for a unique $y\in Y^*$. Thus, for pairs in $D_A\times D_B$,
    \[
        (a_1h(x),b_1h'(y))\in S \iff (x,y)\in R \,,
    \]
    and therefore $S$ is universal if and only if $R$ is universal.
\end{proof}

    The lemma does not require $A$ and $B$ to be disjoint. The proof of Theorem~\ref{thm:moc-undecidable} uses disjoint $A$ and $B$, since it applies~\eqref{eq:synchronization-form}, whereas the reduction of Section~\ref{sec:oc} uses $A=B$.

\subsection{Undecidability of MOC Verification}\label{sec:moc}
    We now reduce universality of rational relations to MOC verification using the two preceding constructions.

\begin{theorem}\label{thm:moc-undecidable}
    Given a DFA $G$ and alphabets $\Sigma_o,\Sigma_{\mathrm{hi}}\subseteq\Sigma$, it is undecidable whether $L(G)$ is MOC. The result holds even if $|\Sigma|=5$, $|\Sigma_o|=2$, and $|\Sigma_{\mathrm{hi}}|=3$.
\end{theorem}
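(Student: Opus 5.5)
The plan is to reduce universality of rational relations to MOC verification. We use Lemma~\ref{lem:relation-extension} to move to two-letter alphabets, and Lemma~\ref{lem:dfa-realization} to turn a synchronization-form representation into a prefix-closed DFA language.

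\emph{Construction.} Let $R\subseteq X^*\times Y^*$ be an arbitrary rational relation. By Lemma~\ref{lem:relation-extension} we effectively obtain a rational $S\subseteq A^*\times B^*$ with disjoint $A=\{a_1,a_2\}$ and $B=\{b_1,b_2\}$, such that $\rann(S)=B^*$ and $S$ is universal if and only if $R$ is universal. Since $A\cap B=\emptyset$, representation~\eqref{eq:synchronization-form} gives a regular $K_S\subseteq(A\cup B)^*$ with $S=\{(\pi_A(w),\pi_B(w))\mid w\in K_S\}$. Take a fresh event $c$ and set
\[
\Sigma=A\cup B\cup\{c\},\qquad \Sigma_o=A,\qquad \Sigma_{\mathrm{hi}}=B\cup\{c\}.
\]
Let $L=(A\cup B)^*\cup K_S c$. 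By Lemma~\ref{lem:dfa-realization} with $\Omega=A\cup B$, $L$ is prefix-closed and a DFA $G$ with $L(G)=L$ is computable from $K_S$. The sizes are as claimed: $|\Sigma|=5$, $|\Sigma_o|=2$ and $|\Sigma_{\mathrm{hi}}|=3$. Moreover $I=\Sigma_o\cap\Sigma_{\mathrm{hi}}=\emptyset$, so Remark~\ref{rem:special-case}(i) applies: $L$ is MOC if and only if all sets $\mathcal O_L(t)$, $t\in Q(L)$, coincide.

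\emph{Computing the observation sets.} On $(A\cup B)^*$ the projections $P$ and $Q$ coincide with $\pi_A$ and $\pi_B$. Hence
\[
Q(L)=B^*\cup \rann(S)c=B^*\cup B^*c,
\]
and here $\rann(S)=B^*$ is used.
\begin{itemize}
\item For $v\in B^*$, the strings $s\in L$ with $Q(s)=v$ are exactly the strings of $(A\cup B)^*$ projecting to $v$. Any $u\in A^*$ can be interleaved with $v$, so $\mathcal O_L(v)=A^*$.
\item For $vc$, the realizations are the strings $wc$ with $w\in K_S$ and $\pi_B(w)=v$. Hence $\mathcal O_L(vc)=\{u\in A^*\mid (u,v)\in S\}$.
\end{itemize}
Therefore $L$ is MOC if and only if $\{u\mid(u,v)\in S\}=A^*$ for every $v\in B^*$, that is, if and only if $S$ is universal, which holds if and only if $R$ is universal. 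Because universality is undecidable, so is MOC.

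\emph{Main obstacle.} The delicate point is making sure every $v\in B^*$ actually contributes a string $vc\in Q(L)$. Otherwise the MOC condition would be vacuous for $v\notin\rann(S)$, and a non-universal $S$ could still yield an MOC language. This is exactly why $\rann(S)=B^*$ was arranged in Lemma~\ref{lem:relation-extension}. The other point to check is that the unmarked part $(A\cup B)^*$ gives $\mathcal O_L(v)=A^*$ for each $v$, so that it fixes the common target set. Together these make the equivalence exact.
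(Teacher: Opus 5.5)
Your proposal is correct and is essentially the paper's proof. It uses the same reduction through Lemma~\ref{lem:relation-extension}, with $\rann(S)=B^*$ guaranteeing $vc\in Q(L)$ for every $v\in B^*$; the same language $(A\cup B)^*\cup K_S c$, realized via Lemma~\ref{lem:dfa-realization}; and the same comparison via Remark~\ref{rem:special-case} of the observation sets $\mathcal O_L(v)=A^*$ and $\mathcal O_L(vc)=\{u\mid (u,v)\in S\}$, with only the event names differing.
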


\begin{proof}
    Let $R\subseteq X^*\times Y^*$ be a rational relation, let $A=\{0,1\}$ and $B=\{2,3\}$, and let $S\subseteq A^*\times B^*$ be the relation provided by Lemma~\ref{lem:relation-extension}. Since $A$ and $B$ are disjoint, \eqref{eq:synchronization-form} applies to $S$: writing $U=A\mathbin{\dot\cup}B$, there is a regular language $K\subseteq U^*$ such that
    \[
        S = \{(\pi_{A}(w),\pi_{B}(w))\mid w\in K\}.
    \]
    Take a fresh event $a$, and define
    \[
        \Sigma=U\mathbin{\dot\cup}\{a\},\qquad
        \Sigma_o=A,\qquad
        \Sigma_{\mathrm{hi}}=B\mathbin{\dot\cup}\{a\}.
    \]
    Then $\Sigma = \Sigma_o \mathbin{\dot\cup} \Sigma_{\mathrm{hi}}$ and $\Sigma_o \cap \Sigma_{\mathrm{hi}} =\emptyset$. Let
    \begin{equation}\label{eq:moc-language}
         L = U^* \cup Ka \,.
    \end{equation}
    By Lemma~\ref{lem:dfa-realization}, $L$ is prefix-closed and effectively represented by a DFA.

    Since $a\in\Sigma_{\mathrm{hi}}$, the high-level projections of strings of $Ka$ are the strings $\pi_{B}(k)a$ with $k\in K$, that is, the strings $ta$ with $t\in\rann(S)=B^*$. Together with $Q(U^*)=B^*$, this gives
    \[
        Q(L)=B^*\cup B^*a.
    \]
    Fix $t\in B^*$. For every $u \in A^{*}$, the string $ut$ belongs to $U^{*}$ and satisfies $P(ut) = u$ and $Q(ut) = t$, and hence $u \in \mathcal O_L(t)$, that is, $A^* \subseteq \mathcal O_L(t)$. Conversely, $P(U^{*}) \subseteq A^{*}$, the high-level projection of every string of $Ka$ ends with $a$, and hence differs from $t$, so the strings of $Ka$ do not contribute to $\mathcal O_L(t)$. Therefore,
    \begin{equation}\label{eq:obsmoc-free}
        \mathcal O_{L}(t) = A^{*}.
    \end{equation}
    A string of $L$ with high-level projection $ta$ is of the form $ka$ with $k\in K$ and $\pi_{B}(k)=t$, and, as $a\notin\Sigma_o$, its observation is $\pi_{A}(k)$. Hence
    \begin{equation}\label{eq:obsmoc-marked}
        \mathcal O_{L}(ta) = \{x\in A^*\mid (x,t)\in S\}.
    \end{equation}
    Since $\Sigma_o \cap \Sigma_{\mathrm{hi}} =\emptyset$, Remark~\ref{rem:special-case} says that $L$ is MOC if and only if all the sets~\eqref{eq:obsmoc-free} and~\eqref{eq:obsmoc-marked} coincide, that is, if and only if $\{x \in A^* \mid (x,t)\in S\}=A^*$ for every $t\in B^*$. The latter means that $S=A^*\times B^*$. By Lemma~\ref{lem:relation-extension},
    \[
      L \text{ is MOC}
      \iff S = A^{*} \times B^{*}
      \iff R = X^{*} \times Y^{*} .
    \]
    Hence an algorithm deciding MOC would decide universality of rational relations, which is a contradiction.
\end{proof}

\subsection{Undecidability of OC Verification}\label{sec:oc}
    Let $R_0\subseteq X_0^*\times Y_0^*$ be rational, and let 
    \[
        V=\{2,3\}\,.
    \]
    Applying Lemma~\ref{lem:relation-extension} with $A=B=V$ yields a rational relation $R\subseteq V^*\times V^*$ that is universal if and only if $R_0$ is universal. Write $R$ in the form~\eqref{eq:nivat-form}, that is,
    \[
        R=\{(f(k),g(k))\mid k\in H\},
        \qquad H\subseteq C^*\ \text{regular},
    \]
    and put
    \[
        Z=\{0,1\} \,.
    \]
    Choose $m\ge 2$ such that $2^m \ge |C|$, and let $e\colon C^* \to Z^*$ be a morphism that maps every event of $C$ to a string of length $m$ and is injective on events; then $e$ is injective, and the length of every string of $e(C^*)$ is divisible by $m$. 
    Define the morphisms
    \[
        h_f(c)=e(c) f(c),\qquad h_g(c)=e(c) g(c),\qquad c\in C,
    \]
    with codomain $(Z\cup V)^*$. The language
    \begin{align}
        J = {} & 0 2 V^* ~\cup~ 1 3 V^* ~\cup~ 2 h_f(H) ~\cup~ 3 h_g(H)
        \label{eq:oc-relation-language}
    \end{align}
    is regular. Let $\pi_Z\colon(Z\cup V)^*\to Z^*$ and $\pi_V\colon(Z\cup V)^*\to V^*$ denote the projections. Since $e(c)\in Z^*$ and $f(c),g(c)\in V^*$, we have, for every $k\in H$,
    \begin{align*}
        \pi_Z(h_f(k))&=e(k)\,, & \pi_V(h_f(k))&=f(k)\,,\\
        \pi_Z(h_g(k))&=e(k)\,, & \pi_V(h_g(k))&=g(k)\,.
    \end{align*}
    The four languages in~\eqref{eq:oc-relation-language} are pairwise disjoint, because their strings begin with $0$, $1$, $2$, and $3$, respectively. Consequently, the relation induced by the projections on $J$ consists of the following pairs:
    \begin{align}
        &(0,2 x) ~~x\in V^*\,, &
        &(1,3 y) ~~y\in V^*\,,\notag\\
        &(e(k),2 f(k)) ~~k\in H\,, &
        &(e(k),3 g(k)) ~~k\in H\,.
        \label{eq:oc-projection-relation}
    \end{align}
    Consequently, the associated observation sets are
    \begin{align}
        \mathcal O_{J}(2 x)&=\{0\}\cup e(\{k\in H\mid f(k)=x\})\,,\label{eq:alpha-observation-set}\\
        \mathcal O_{J}(3 y)&=\{1\}\cup e(\{k\in H\mid g(k)=y\})\,.
        \label{eq:beta-observation-set}
    \end{align}
    Here $\mathcal O_{J}$ is defined as in~\eqref{eq:observation-set}, using the projections $\pi_Z$ and $\pi_V$ instead of $P$ and $Q$.

    The one-letter words $0$ and $1$ are distinct, and neither belongs to $e(C^*)$, whose strings have length divisible by $m\ge2$. These two observation sets therefore intersect if and only if they contain a common string $e(k)$; since $e$ is injective, it is if and only if a $k\in H$ satisfies $f(k)=x$ and $g(k)=y$. Hence,
    \begin{equation}\label{eq:cross-observation-set}
        \mathcal O_{J}(2 x)\cap \mathcal O_{J}(3 y)\ne\emptyset
        \quad\Longleftrightarrow\quad (x,y)\in R \,.
    \end{equation}

    We are now ready to prove the main result for OC. The proof builds on the constructions developed above.
\begin{theorem}\label{thm:oc-undecidable}
    Given a DFA $G$ and alphabets $\Sigma_o,  \Sigma_{\mathrm{hi}} \subseteq \Sigma$, it is undecidable whether $L(G)$ is OC. The result holds even if $|\Sigma|=5$, $|\Sigma_o|=2$, and $|\Sigma_{\mathrm{hi}}|=3$.
\end{theorem}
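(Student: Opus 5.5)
The plan is to follow the pattern of the proof of Theorem~\ref{thm:moc-undecidable}, with the language $J$ from~\eqref{eq:oc-relation-language} playing the role of $K$. I take a fresh event $a$ and set
\[
  \Sigma=Z\mathbin{\dot\cup}V\mathbin{\dot\cup}\{a\},\qquad \Sigma_o=Z=\{0,1\},\qquad \Sigma_{\mathrm{hi}}=V\mathbin{\dot\cup}\{a\}=\{2,3,a\}.
\]
This gives $|\Sigma|=5$, $|\Sigma_o|=2$, $|\Sigma_{\mathrm{hi}}|=3$ and $\Sigma_o\cap\Sigma_{\mathrm{hi}}=\emptyset$. I then define $L=(Z\cup V)^*\cup Ja$. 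By Lemma~\ref{lem:dfa-realization} with $\Omega=Z\cup V$ and $K=J$, the language $L$ is prefix-closed and given effectively by a DFA. On $(Z\cup V)^*$ the projections $P$ and $Q$ coincide with $\pi_Z$ and $\pi_V$. Since $\Sigma_o\cap\Sigma_{\mathrm{hi}}=\emptyset$, Remark~\ref{rem:special-case} says that $L$ is OC if and only if every two observation sets $\mathcal O_L(t)$ and $\mathcal O_L(t')$, for $t,t'\in Q(L)$, intersect.

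Next I compute $Q(L)$ and the observation sets. The strings $0\,2x$ and $1\,3y$ lie in $J$, so $Q(J)=2V^*\cup 3V^*$; the remaining parts of $J$ also project into this set. Hence $Q(L)=V^*\cup(2V^*\cup3V^*)a$. For $t\in V^*$, exactly as in~\eqref{eq:obsmoc-free}, every $ut$ with $u\in Z^*$ lies in $L$, and strings of $Ja$ have high-level projection ending in $a$. This gives $\mathcal O_L(t)=Z^*$. For $t\in Q(J)$, the strings of $L$ with projection $ta$ are exactly $wa$ with $w\in J$ and $\pi_V(w)=t$. Since $a$ is unobservable, $\mathcal O_L(ta)=\mathcal O_J(t)$, and these sets are given by~\eqref{eq:alpha-observation-set} and~\eqref{eq:beta-observation-set}.

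Then I check the intersection condition case by case.
\begin{itemize}
\item If one of $t,t'$ lies in $V^*$, its set is $Z^*$. It meets the other set, which is nonempty because it contains $0$ or $1$ or equals $Z^*$.
\item Two sets of the form $\mathcal O_J(2x)$ and $\mathcal O_J(2x')$ both contain $0$.
\item Two sets of the form $\mathcal O_J(3y)$ and $\mathcal O_J(3y')$ both contain $1$.
\item For the mixed pairs $\mathcal O_J(2x)$ and $\mathcal O_J(3y)$, equivalence~\eqref{eq:cross-observation-set} says they intersect if and only if $(x,y)\in R$.
\end{itemize}
Since $x$ and $y$ range over all of $V^*$, $L$ is OC if and only if $R=V^*\times V^*$. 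By Lemma~\ref{lem:relation-extension}, this holds if and only if $R_0$ is universal. An algorithm for OC would therefore decide universality of rational relations, a contradiction.

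The main obstacle, namely separating the same-class pairs from the cross pairs, has already been handled by the markers $0$ and $1$ and by equivalence~\eqref{eq:cross-observation-set}. What remains is careful bookkeeping. I must verify that $Q(L)$ contains no further strings; this holds because $Q(Ja)\subseteq(2V^*\cup3V^*)a$, which relies on $\rann(R)=V^*$ only to know that all of $2V^*$ and $3V^*$ occur. I must also verify that the free part $(Z\cup V)^*$ does not contribute to the sets $\mathcal O_L(ta)$, which holds because its high-level projections contain no $a$.
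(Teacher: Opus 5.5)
Your proposal is correct and follows the paper's proof essentially step for step: the same alphabets, the same language $L=(Z\cup V)^*\cup Ja$, the same observation sets $\mathcal O_L(t)=Z^*$ and $\mathcal O_L(ta)=\mathcal O_J(t)$, and the same case split in which the markers $0$ and $1$ settle the same-class pairs and~\eqref{eq:cross-observation-set} reduces the mixed pairs to universality of $R$. One small slip is in your closing remark: $\rann(R)=V^*$ plays no role here, because the components $02V^*$ and $13V^*$ of $J$ already put all of $2V^*a$ and $3V^*a$ into $Q(L)$, exactly as you argued earlier.
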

\begin{proof}
    Introduce a fresh event $a$ and set
    \[
        \Sigma=Z\mathbin{\dot\cup}V\mathbin{\dot\cup}\{a\}=\{0,1,2,3,a\},~~
        \Sigma_o=Z,~~
        \Sigma_{\mathrm{hi}}=V\mathbin{\dot\cup}\{a\}.
    \]
    Then $\Sigma=\Sigma_o\mathbin{\dot\cup}\Sigma_{\mathrm{hi}}$ and $I=\Sigma_o\cap \Sigma_{\mathrm{hi}} = \emptyset$. Define
    \begin{equation}\label{eq:oc-language}
        L = (Z\cup V)^*\cup J a.
    \end{equation}
    Lemma~\ref{lem:dfa-realization} shows that $L$ is prefix-closed, regular, and effectively represented by a DFA. Since $a\in\Sigma_{\mathrm{hi}}$,
    \[
        Q(L) = V^*\cup(2 V^*\cup 3 V^*)a = V^*\cup V^+a \,.
    \]

    For every $t\in V^*$, only the language $(Z\cup V)^*$ contributes to $\mathcal O_L(t)$, because the high-level projection of every string of $J a$ ends with $a$. Since $(Z\cup V)^*$ realizes every low-level observation together with every high-level string,
    \[
        \mathcal O_{L}(t)=Z^*.
    \]
    The remaining high-level strings end with $a$ and hence arise only from $J a$. Since $a\notin\Sigma_o$, the projection $P$ erases $a$, and~\eqref{eq:alpha-observation-set} and~\eqref{eq:beta-observation-set} yield
    \begin{align*}
        \mathcal O_{L}(2 x a) &=\{0\}\cup e(\{k\in H\mid f(k)=x\}),\\
        \mathcal O_{L}(3 y a) &=\{1\}\cup e(\{k\in H\mid g(k)=y\}).
    \end{align*}
    All these sets are nonempty subsets of $Z^*$. Consequently, each of them intersects $\mathcal O_{L}(t)=Z^*$; any two sets of the first kind contain $0$, and any two sets of the second kind contain $1$. The only pairs left to consider are $2 x a$ and $3 y a$, and by~\eqref{eq:cross-observation-set}, their observation sets intersect if and only if $(x,y)\in R$. Since $2 x a\in Q(L)$ and $3 y a\in Q(L)$ for all $x,y\in V^*$, and since $I=\emptyset$, Remark~\ref{rem:special-case} gives
    \begin{align*}
        L \text{ is OC}
        &\Longleftrightarrow (x,y)\in R\text{ for all }x,y\in V^*\\
        &\Longleftrightarrow R=V^*\times V^*
        \Longleftrightarrow R_0=X_0^*\times Y_0^*\,,
    \end{align*}
    where the last equivalence is by Lemma~\ref{lem:relation-extension}. An algorithm deciding OC for DFAs would consequently decide universality of rational relations, which is impossible.
\end{proof}

\section{A Decidable Class}\label{sec:decidable}
    The undecidability proofs use that $\mathcal R_L$ is an arbitrary rational relation: the projections $P(s)$ and $Q(s)$ may drift apart without bound. We now introduce a restriction on the plant that removes this freedom and makes both conditions decidable.
  
    The events of $I=\Sigma_o \cap \Sigma_{\mathrm{hi}}$ are the only events both projections retain: they are the only available synchronization.
 
\begin{definition}\label{def:ibounded}
    A language $L\subseteq\Sigma^*$ is \emph{$I$-bounded} if there is $N\in\mathbb N$ such that no string of $L$ contains more than $N$ consecutive events of $\Sigma\setminus I$.
\end{definition}

    Let $G$ be a DFA. Then the condition is decidable in time linear in the size of $G$ by checking acyclicity of the reachable subgraph of $G$ obtained by deleting the transitions labeled by $I$. We thus have the following.
\begin{lemma}\label{lem:ibounded-check}
    Let $G$ be a DFA with all states reachable. Then $L(G)$ is $I$-bounded if and only if $G$ has no cycle all of whose transitions are labeled by events of $\Sigma\setminus I$. \hfill\QED
\end{lemma}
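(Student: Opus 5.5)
The plan is to prove both implications directly, via the correspondence between strings of $L(G)$ and paths in $G$ starting at $x_0$. Here $G$ is a DFA in which every state is reachable from $x_0$. Let $G'$ be the graph obtained from $G$ by deleting all transitions labeled by events of $I$. Let $n$ be the number of states of $G$.

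For the direction ``no such cycle $\Rightarrow$ $I$-bounded'', I take any string $s\in L(G)$ and any factor $w$ of $s$ with $w\in(\Sigma\setminus I)^*$. Write $s=uwv$. Since $G$ is deterministic and $s\in L(G)$, the run on $s$ is defined. Its segment reading $w$ is a path in $G'$ from $\delta(x_0,u)$ of length $|w|$. If $|w|\ge n$, this segment visits $n+1$ states. By the pigeonhole principle some state repeats, and the sub-path between the two occurrences is a nonempty cycle in $G'$. That is a cycle of $G$ all of whose transitions carry labels from $\Sigma\setminus I$, which contradicts the assumption. Hence $|w|\le n-1$, and $N=n-1$ witnesses Definition~\ref{def:ibounded}.

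For the converse, suppose $G$ has a cycle $x\xrightarrow{w}x$ with $w\in(\Sigma\setminus I)^+$. All states are reachable, so there is $u$ with $\delta(x_0,u)=x$. Then $uw^k\in L(G)$ for every $k\in\mathbb N$. This string contains the factor $w^k$, which consists of $k|w|\ge k$ consecutive events of $\Sigma\setminus I$. So no bound $N$ can exist, and $L(G)$ is not $I$-bounded.

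This is also where the reachability hypothesis is needed: without it, a cycle among unreachable states would not affect $L(G)$. The remark before the lemma, that acyclicity of $G'$ can be checked in linear time (for example by depth-first search or topological sorting), follows immediately and needs no further argument. I expect no real obstacle. The only point needing care is the pigeonhole step, which produces a nonempty cycle lying entirely in $G'$.
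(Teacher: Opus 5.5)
Your proof is correct and is essentially the argument the paper leaves implicit: the lemma is stated without proof, as a consequence of the preceding remark on acyclicity of the graph with the $I$-labeled transitions deleted. Your pigeonhole direction gives the same bound $N=|X|-1$ that the paper later relies on, and your pumping direction correctly uses the reachability hypothesis.
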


    In this section, we assume that all states of $G$ are reachable and write $X$ for the state set of $G$. Fix $N=|X|-1$ and put
    \[
        B_o=(\Sigma_o\setminus I)^{\le N}
        \quad \text{ and } \quad 
        B_{\mathrm{hi}}=(\Sigma_{\mathrm{hi}}\setminus I)^{\le N} \,.
    \]
    Every string $w$ over an alphabet $\Omega$ with $I\subseteq\Omega$ uniquely factors
    \[
        w=w_0\,i_1\,w_1\cdots i_n\,w_n,
        \quad i_j\in I,\ w_j\in(\Omega\setminus I)^*,
    \]
    where $i_1,\dots,i_n$ are the successive occurrences of the events of $I$ in $w$. We call it the \emph{$I$-factorization} of $w$ and $w_0,\dots,w_n$ its \emph{blocks}. In particular, every $u\in\Sigma_o^*$ has the $I$-factorization $u=p_0i_1p_1\cdots i_np_n$ with $p_j\in(\Sigma_o\setminus I)^*$, and every $v\in\Sigma_{\mathrm{hi}}^*$ has the $I$-factorization $v=q_0i'_1q_1\cdots i'_mq_m$ with $q_j\in(\Sigma_{\mathrm{hi}}\setminus I)^*$. If $Q_o(u)=P_{\mathrm{hi}}(v)$, then $n=m$ and $i_j=i'_j$ for all $j$, and the two factorizations can be interleaved. Over the finite alphabet
    \[
        \Gamma=(B_o\times B_{\mathrm{hi}})\ \dot\cup\ I \,,
    \]
    define the encoding of such a pair by
    \[
        \enc(u,v)=(p_0,q_0)\,i_1\,(p_1,q_1)\,i_2\cdots i_n\,(p_n,q_n).
    \]
    Since the $I$-factorization is unique, the map $\enc$ is injective on the pairs it is defined for. It is defined on all of $\mathcal C_L$ whenever $L$ is $I$-bounded. We write $\enc_o(u)=p_0\,i_1\,p_1\cdots i_n\,p_n$ and $\enc_{\mathrm{hi}}(v)=q_0\,i_1\,q_1\cdots i_n\,q_n$ for the corresponding encodings over the alphabets $B_o\mathbin{\dot\cup}I$ and $B_{\mathrm{hi}}\mathbin{\dot\cup}I$. 

    By construction, the two kinds of letters alternate in an encoding, which begins and ends with a letter of $B_o\times B_{\mathrm{hi}}$. In other words, every encoding belongs to the regular language
    \begin{equation}\label{eq:alternating}
        \mathrm{Alt}=(B_o\times B_{\mathrm{hi}})\bigl(I\,(B_o\times B_{\mathrm{hi}})\bigr)^*\subseteq\Gamma^*,
    \end{equation}
    and the same holds, mutatis mutandis, for $\enc_o$ and $\enc_{\mathrm{hi}}$.

\begin{theorem}\label{thm:ibounded-decidable}
    Let $G$ be a DFA such that $L(G)$ is $I$-bounded. Then it is decidable whether $L(G)$ is OC, and whether $L(G)$ is MOC. Both problems are in \textup{PSPACE}.
\end{theorem}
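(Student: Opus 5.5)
The plan is to encode everything over the finite alphabet $\Gamma$ via $\enc$, so that $\mathcal R_L$, $\mathcal C_L$ and the relevant observation-set conditions become regular languages over $\Gamma$ recognized by NFAs of exponential size but with polynomially describable states.

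\emph{Step 1: $\mathcal R_L$ becomes regular.} Since $L=L(G)$ is $I$-bounded and all states are reachable, Lemma~\ref{lem:ibounded-check} shows that every maximal $(\Sigma\setminus I)$-segment of a string of $L$ has length at most $N=|X|-1$. Hence for $s\in L$ the blocks of $P(s)$ and $Q(s)$ lie in $B_o$ and $B_{\mathrm{hi}}$. By~\eqref{eq:commuting-projections}, $\enc(P(s),Q(s))$ is defined. I would build an NFA $\mathcal A_R$ over $\Gamma$ whose states are states of $G$. On a letter $(p,q)$ it moves from $x$ to $y$ if there is a path $x\to y$ in $G$ over $\Sigma\setminus I$ of length at most $N$ whose $P$-image is $p$ and whose $Q$-image is $q$. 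On $i\in I$ it follows $\delta$. All states are accepting. Then $\mathcal L(\mathcal A_R)=\enc(\mathcal R_L)$. Letters of $\Gamma$ have length $O(N)$ and each transition test is a bounded reachability check, so the transitions can be decided on the fly in polynomial space; $|\Gamma|$ is exponential, but that is harmless for on-the-fly algorithms.

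\emph{Step 2: MOC.} By Proposition~\ref{prop:moc-relational}, $L$ is MOC iff $\mathcal C_L\subseteq\mathcal R_L$. We have $\enc(\mathcal C_L)=\{w\in\mathrm{Alt}\mid \enc_o^{-1}\text{-component in }\enc_o(P(L)),\ \enc_{\mathrm{hi}}\text{-component in }\enc_{\mathrm{hi}}(Q(L))\}$. Its recognizer is a product: an automaton for $\mathrm{Alt}$, a subset construction of $G$ tracking the $P$-image (the projection of the $(p,q)$-letter to $p$), and a subset construction tracking the $Q$-image. The injectivity of $\enc$ gives $\mathcal C_L\subseteq\mathcal R_L\iff\enc(\mathcal C_L)\subseteq\enc(\mathcal R_L)$. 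Non-inclusion is witnessed by a word accepted by the left automaton and rejected by the right one, which is checked by a nondeterministic search over states $(\text{Alt-state},S_1,S_2,T)$, where $S_1,S_2,T\subseteq X$ are subsets, guessing one letter of $\Gamma$ at a time. Each configuration has polynomial size, so by Savitch's theorem this is in NPSPACE $=$ PSPACE.

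\emph{Step 3: OC.} By Proposition~\ref{prop:observation-set-characterizations}, non-OC means that some $t,t'\in Q(L)$ with $P_{\mathrm{hi}}(t)=P_{\mathrm{hi}}(t')$ have disjoint $\mathcal O_L(t)$ and $\mathcal O_L(t')$. I would encode pairs $(t,t')$ jointly over $(B_{\mathrm{hi}}\times B_{\mathrm{hi}})\dot\cup I$, which is possible since their $I$-skeletons agree. The condition ``there exists $u$ with $(u,t),(u,t')\in\mathcal R_L$'' is then the projection of a product of two copies of $\mathcal A_R$ synchronized on the $p$-component, and hence regular. Non-OC asks for an encoded pair that lies in $\enc(Q(L))^2$ but not in this projected language. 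That is again a complement of an NFA, handled by a subset construction over $X\times X$, with on-the-fly guessing, in PSPACE.

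The main obstacle is making the complement/subset arguments genuinely polynomial-space despite $\Gamma$ being exponential. The approach is to never materialize $\Gamma$: a letter is guessed symbol by symbol (its length is at most $N$), and the subset successors are computed by bounded-path reachability in $G$. Some care is also needed to check that the blocks of every element of $\mathcal C_L$, and not just of $\mathcal R_L$, are bounded. This follows because the blocks of $P(L)$ and $Q(L)$ are themselves images of $(\Sigma\setminus I)$-segments of strings of $L$.
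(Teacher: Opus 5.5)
Your proposal is correct and follows essentially the same route as the paper. You encode $\mathcal R_L$, $\mathcal C_L$, the high-level pairs, and the projection of two copies of $\mathcal A_R$ synchronized on the observation component over the block alphabet, and then decide the resulting inclusions on the fly in polynomial space. Two small slips are worth fixing:
\begin{enumerate}
\item Without the paper's alternation flag, your $\mathcal A_R$ also accepts non-alternating words such as the empty word or $(p,q)(\eps,\eps)$. So $\mathcal L(\mathcal A_R)=\enc(\mathcal R_L)$ holds only after intersecting with $\mathrm{Alt}$. This is harmless because your left-hand sides are alternating.
\item In the OC subset step, the successor also quantifies existentially over the erased $p\in B_o$, so it is not plain reachability in $G$. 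It is still polynomial-space: enumerate $p$, or, as the paper does, let two copies of $G$ move jointly on $\Sigma_o\setminus I$.
\end{enumerate}
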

\begin{proof}
    Write $L=L(G)$. We show that $\enc(\mathcal R_L)$ and $\enc(\mathcal C_L)$ are regular and effectively constructible, so that
    \[
        L\ \text{is MOC}
        \iff
        \enc(\mathcal C_L)\subseteq\enc(\mathcal R_L)
    \]
    by Proposition~\ref{prop:moc-relational} and injectivity of $\enc$.
 
    For $\enc(\mathcal R_L)$, let $X$ and $x_0$ be the state set and the initial state of $G$, and build the NFA $A_{\mathcal R}$ over $\Gamma$ with the state set $X\times\{0,1\}$, the initial state $(x_0,0)$, the accepting states $X\times\{1\}$, and the transitions
    \begin{itemize}
        \item $(x,0)\xrightarrow{(p,q)}(x',1)$ if $\delta(x,w)=x'$ for $w\in(\Sigma\setminus I)^{\le N}$ with $P(w)=p$ and $Q(w)=q$, and
        \item $(x,1)\xrightarrow{i}(x',0)$ if $\delta(x,i)=x'$, for $i\in I$.
    \end{itemize}
    The second component forces the two kinds of letters to alternate, and hence $\mathcal L(A_{\mathcal R})\subseteq\mathrm{Alt}$.

    We claim that $\mathcal L(A_{\mathcal R})=\enc(\mathcal R_L)$. 
    
    Let $(u,v)\in\mathcal R_L$, say $u=P(s)$ and $v=Q(s)$ for some $s\in L$, and let $s=w_0i_1w_1\cdots i_nw_n$ be its $I$-factorization; by $I$-boundedness of $L$, $|w_j|\le N$ for every $j$. Since $s\in L$, the map $\delta$ is defined along $s$, and hence
    \begin{multline*}
        (x_0,0)\xrightarrow{(P(w_0),Q(w_0))}(y_0,1)\xrightarrow{i_1}(x_1,0) \xrightarrow{(P(w_1),Q(w_1))}\cdots \\
        \cdots\xrightarrow{(P(w_n),Q(w_n))}(y_n,1)\,,
    \end{multline*}
    where $y_j=\delta(x_j,w_j)$ and $x_j=\delta(y_{j-1},i_j)$, is a run of $A_{\mathcal R}$; it ends in $X\times\{1\}$ and is thus accepting. Since $P$ and $Q$ fix the events of $I$, and no block $w_j$ contains such an event, $P(w_0)\,i_1\cdots i_n\,P(w_n)$ and $Q(w_0)\,i_1\cdots i_n\,Q(w_n)$ are the $I$-factorizations of $u$ and $v$, respectively. The label of the run is thus $\enc(u,v)$, that is, $\enc(u,v) \in \mathcal L(A_{\mathcal R})$.
 
    Conversely, let $\gamma\in\mathcal L(A_{\mathcal R})$. Every letter of $\Gamma$ determines which of the two kinds of transitions may read it, and an accepting run starts in $(x_0,0)$ and ends in $X\times\{1\}$; hence the run of $A_{\mathcal R}$ on $\gamma$ has the form
    \[
        (x_0,0)\xrightarrow{(p_0,q_0)}(y_0,1)\xrightarrow{i_1}(x_1,0)\cdots\xrightarrow{(p_n,q_n)}(y_n,1) \,,
    \]
    and $\gamma=(p_0,q_0)\,i_1\cdots i_n\,(p_n,q_n)$. For every $j$, choose $w_j\in(\Sigma\setminus I)^{\le N}$ realizing the transition $(x_j,0)\xrightarrow{(p_j,q_j)}(y_j,1)$, that is, $\delta(x_j,w_j)=y_j$, $P(w_j)=p_j$, and $Q(w_j)=q_j$, and set $s=w_0i_1w_1\cdots i_nw_n$. Then $\delta(x_0,s)=y_n$, and hence $s\in L$, and $w_0i_1w_1\cdots i_nw_n$ is the $I$-factorization of $s$, because no $w_j$ contains an event of $I$. Consequently, $P(s)=p_0\,i_1\cdots i_n\,p_n$, $Q(s)=q_0\,i_1\cdots i_n\,q_n$, and $\gamma=\enc(P(s),Q(s))\in\enc(\mathcal R_L)$.

    For $\enc(\mathcal C_L)$, the same construction applied to $P$ alone, respectively $Q$ alone, gives NFAs over $B_o\dot\cup I$ and $B_{\mathrm{hi}}\dot\cup I$ accepting the block encodings of $P(L)$ and of $Q(L)$. Let $\rho_o\colon\Gamma^*\to(B_o\dot\cup I)^*$ and $\rho_{\mathrm{hi}}\colon\Gamma^*\to(B_{\mathrm{hi}}\dot\cup I)^*$ be the morphisms erasing the second, respectively the first, component of the pairs and fixing $I$. Then
    \[
        \enc(\mathcal C_L)=\rho_o^{-1}(\enc_o(P(L)))\cap\rho_{\mathrm{hi}}^{-1}(\enc_{\mathrm{hi}}(Q(L))),
    \]
    because a pair lies in $\mathcal C_L$ if and only if its components lie in $P(L)$ and $Q(L)$ and their $I$-projections agree, the latter being enforced by the common $I$-letters of the encoding. Regular languages are closed under inverse morphisms and intersection, and hence $\enc(\mathcal C_L)$ is regular, which completes the proof for MOC.
 
    To consider the verification of OC, put
    \[
        \Gamma_3=(B_o\times B_{\mathrm{hi}}\times B_{\mathrm{hi}})\,\dot\cup\,I
        \quad\text{and}\quad
        \Gamma_2=(B_{\mathrm{hi}}\times B_{\mathrm{hi}})\,\dot\cup\,I,
    \]
    and extend $\enc$ to triples and to pairs of high-level strings as before: if all blocks are of length at most $N$ and $Q_o(u)=P_{\mathrm{hi}}(v)=P_{\mathrm{hi}}(v')$, then $\enc(u,v,v')\in\Gamma_3^*$ and $\enc(v,v')\in\Gamma_2^*$ are the interleavings of the respective $I$-factorizations.

    Let $A_3$ be the product of two copies of $A_{\mathcal R}$ that synchronizes on the first component of the pairs and on the letters of $I$. The two copies read letters of the same kind at the same time, and hence their alternation components stay equal; the product thus has $2|X|^2$ reachable states and accepts only alternating words. Since $\mathcal L(A_{\mathcal R})=\enc(\mathcal R_L)$,
    \[
        \mathcal L(A_3)=\{\enc(u,v,v')\mid u\in\mathcal O_L(v)\cap\mathcal O_L(v')\} \,,
    \]
    the synchronization on the first component expressing that $(u,v)$ and $(u,v')$ lie in $\mathcal R_L$ for the same $u$.

    Set $D=\{\enc(v,v')\mid v,v'\in Q(L),\ P_{\mathrm{hi}}(v)=P_{\mathrm{hi}}(v')\}$ and let $\sigma_1,\sigma_2\colon\Gamma_2^*\to(B_{\mathrm{hi}}\dot\cup I)^*$ be the morphisms erasing the second, respectively the first, component of the pairs and fixing $I$. We show that
    \[
        D=\sigma_1^{-1}(\enc_{\mathrm{hi}}(Q(L)))\cap\sigma_2^{-1}(\enc_{\mathrm{hi}}(Q(L))) \,,
    \]
    which is the counterpart of the identity established for $\enc(\mathcal C_L)$. Both $\sigma_1$ and $\sigma_2$ are length-preserving and send letters of $B_{\mathrm{hi}}\times B_{\mathrm{hi}}$ to letters of $B_{\mathrm{hi}}$ and letters of $I$ to themselves; since every encoding is alternating, so is every word of either preimage. Let $\gamma=(q_0,q_0')\,i_1\cdots i_n\,(q_n,q_n')$ be such a word and put $v=q_0i_1\cdots i_nq_n$ and $v'=q_0'i_1\cdots i_nq_n'$. Then $\sigma_1(\gamma)=\enc_{\mathrm{hi}}(v)$ and $\sigma_2(\gamma)=\enc_{\mathrm{hi}}(v')$, that is, $\gamma$ lies in the intersection if and only if $v,v'\in Q(L)$, by injectivity of $\enc_{\mathrm{hi}}$. Moreover, the two strings share the letters $i_1,\dots,i_n$, which are their $I$-projections; the interleaving therefore enforces $P_{\mathrm{hi}}(v)=P_{\mathrm{hi}}(v')$, and $\gamma=\enc(v,v')$. Conversely, every element of $D$ arises in this way. Hence $D$ is regular, being accepted by the product of two copies of the NFA for $\enc_{\mathrm{hi}}(Q(L))$ in which a letter $(q,q')$ makes the first copy read $q$ and the second copy read $q'$, and a letter of $I$ is read by both; this product has $O(|X|^2)$ states.

    Finally, let $\pi\colon\Gamma_3^*\to\Gamma_2^*$ erase the first component of the triples and fix $I$. Then $\pi(\enc(u,v,v'))=\enc(v,v')$, and
    \[
        \pi(\mathcal L(A_3))=\{\enc(v,v')\mid\mathcal O_L(v)\cap\mathcal O_L(v')\ne\emptyset\} \,,
    \]
    where the pairs $(v,v')$ range over $Q(L)\times Q(L)$ with $P_{\mathrm{hi}}(v)=P_{\mathrm{hi}}(v')$, since a nonempty $\mathcal O_L(v)\cap\mathcal O_L(v')$ forces both conditions. Being length-preserving, $\pi$ requires no closure argument beyond renaming the letters: $\pi(\mathcal L(A_3))$ is accepted by the NFA obtained from $A_3$ by replacing the label $(p,q,q')$ by $(q,q')$, on the same state set. By Proposition~\ref{prop:observation-set-characterizations},
    \[
        L\ \text{is OC}
        \iff
        D\subseteq\pi(\mathcal L(A_3)) \,,
    \]
    an inclusion between regular languages, and hence decidable.
 
    \emph{Space complexity.} Note that $\Gamma$ has exponentially many letters, and that none of the automata is constructed explicitly; they are represented succinctly, and their transitions are computed on the fly. This is possible because $N<|X|$, so that a letter of $\Gamma$ is a pair of strings of length at most $|X|$ and is written in polynomial space, and because the transition relations of all the automata are decided in polynomial time as follows.
    
    For $A_{\mathcal R}$, whether $(x,0)\xrightarrow{(p,q)}(x',1)$ is a transition amounts to reachability from $(x,\eps,\eps)$ to $(x',p,q)$ in the product of $G$ with the prefixes of $p$ and the prefixes of $q$, where the transitions labeled by events of $I$ are removed. Since $L$ is $I$-bounded, every $w\in(\Sigma\setminus I)^*$ for which $\delta(x,w)$ is defined satisfies $|w|\le N$. The automata for $\enc_o(P(L))$, for $\enc_{\mathrm{hi}}(Q(L))$, and hence for $\enc(\mathcal C_L)$ and for $D$, are handled in the same way.

    The automaton accepting $\pi(\mathcal L(A_3))$ needs a separate argument, because relabeling $(p,q,q')$ by $(q,q')$ quantifies the first component existentially over the exponentially many letters of $B_o$. Its transitions are nevertheless decided in polynomial time: a letter $(q,q')$ leads from $(x,\tilde x)$ to $(y,\tilde y)$ if and only if $(x,\tilde x,\eps,\eps)$ reaches $(y,\tilde y,q,q')$ in the product of two copies of $G$ with the prefixes of $q$ and $q'$, in which the two copies move simultaneously and on the same event along the events of $\Sigma_o\setminus I$, which are the events recorded by the erased component, and independently along the remaining events of $\Sigma\setminus I$. The common first component $p$ is then the sequence of events read simultaneously, and it never has to be written down.

    Both criteria are thus inclusions between languages of succinctly represented NFAs with $O(|X|^2)$ states, which are decided in polynomial space by the standard on-the-fly algorithm~\cite{stockmeyer1973word,savitch1970relationships}. The exponential size of $\Gamma$ is therefore immaterial; the complexity comes from the inclusion test.
\end{proof}
 
    The class is nontrivial in both directions. Let $\Sigma_o=\{a,c\}$ and $\Sigma_{\mathrm{hi}}=\{x,c\}$, so that $I=\{c\}$. The prefix closure $L$ of $(ac+xc)^*$ is $I$-bounded with $N=1$, and it is neither OC nor MOC: the high-level strings $c$ and $xc$ satisfy $P_{\mathrm{hi}}(c)=c=P_{\mathrm{hi}}(xc)$, since $x\notin I$, while $\mathcal O_L(c)=\{ac,aca\}$ and $\mathcal O_L(xc)=\{c,ca\}$ are disjoint, since every string of the former begins with $a$ and every string of the latter with $c$; the failure of MOC then follows because MOC implies OC. Over the same alphabets, the language $c^*$ is $I$-bounded and MOC. Every finite language is $I$-bounded. In contrast, the languages built in Theorems~\ref{thm:moc-undecidable} and~\ref{thm:oc-undecidable} contain $U^*$, respectively $(Z\cup V)^*$, with $I=\emptyset$, and are not $I$-bounded.
 
    The complexity bound of Theorem~\ref{thm:ibounded-decidable} cannot be improved, since it is matched by the following lower bound. Its instances have $N=3$, and hence a block alphabet $\Gamma$ of polynomial size; the exponential size of $\Gamma$ in general is therefore not what makes the problems hard.

\begin{theorem}\label{thm:ibounded-hard}
    Verification of OC, and verification of MOC, is PSPACE-hard on the $I$-bounded instances, even if $N=3$.
\end{theorem}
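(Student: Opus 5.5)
We reduce from NFA universality (equivalently, DFA/NFA inclusion), which is PSPACE-complete. The reduction will be in polynomial time and produce an $I$-bounded DFA with $N=3$. The idea is to mimic the undecidability reductions of Theorems~\ref{thm:moc-undecidable} and~\ref{thm:oc-undecidable}, but with $I$ nonempty. Every letter of the input word is then an event of $I$, separated by short unobservable or high-level-only blocks. This keeps the synchronization tight and the blocks of length at most $3$.

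Let $M=(S,\Delta,\delta_M,s_0,F)$ be an NFA with $\mathcal L(M)\subseteq\Delta^*$; we ask whether $\mathcal L(M)=\Delta^*$. First we make $M$ have a single-branching structure: we resolve the nondeterminism by choice events. For each state $s$ and letter $d$, let $\delta_M(s,d)=\{s_1,\dots,s_r\}$; we encode the choice of $s_j$ by a short string over a fixed two-letter alphabet of unobservable, non-high-level choice events. This avoids unbounded blocks; alternatively we let the choice be a single fresh event $c_j$ of $\Sigma\setminus(\Sigma_o\cup\Sigma_{\mathrm{hi}})$, since the alphabet size is not fixed here. Set $I=\Delta$, let $\Sigma_o=\Delta$, and let $\Sigma_{\mathrm{hi}}=\Delta\cup\{a\}$ with a fresh $a\in\Sigma_{\mathrm{hi}}\setminus I$. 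Mirroring~\eqref{eq:moc-language}, we take
\[
L=\overline{\Delta^*}\ \cup\ K\,a,
\]
where $K$ is the language of the DFA $G_M$ over $\Delta\cup\{c_j\}$ that simulates runs of $M$ with explicit choices and ends in a state of $F$. We also add a free branch so that every $w\in\Delta^*$ is realized without $a$. To make the languages disjoint, the plain branch and the $M$-branch are separated by an initial unobservable non-high event $b$ versus $b'$. Then $Q(L)=\Delta^*\cup\{wa\mid w\in\mathcal L(M)\}$. In particular, $Q(s)$ ending with $a$ requires $P_{\mathrm{hi}}$ to equal the observation $w$.

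Next we compute the observation sets via Proposition~\ref{prop:observation-set-characterizations}. The key point is that $P_{\mathrm{hi}}(wa)=w=P_{\mathrm{hi}}(w)$, so both conditions compare $wa$ with $w$. Here $\mathcal O_L(w)=\{w\}$ and $\mathcal O_L(wa)=\{w\}$ for $w\in\mathcal L(M)$. This is too easy: the conditions hold trivially. The reduction therefore must instead make \emph{failure} of universality visible. To do this, we shift the role of $a$. Every high-level string $wa$ should be in $Q(L)$ via a default branch, while the observation of $wa$ depends on membership. Concretely, $wa$ is produced either by the branch ``$b'$, simulate $M$ on $w$ to an accepting state, then $a$'', which is observed as $w$, or by the branch ``$b''$, read $w$, then $a$ followed by an observable marker $e\in\Sigma_o\setminus I$'', which is observed as $we$. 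We also add a plain branch producing high-level $w$ with observation $we$. Then $\mathcal O_L(w)=\{we\}\cup\{\dots\}$, and we arrange that $\mathcal O_L(wa)\ni w$ iff $w\in\mathcal L(M)$. We then tune the branches so that MOC (equality of sets), and also OC (nonempty intersection with the set of a partner string like $w$ realized only as observation $w$ through another high-level event), holds iff $\mathcal L(M)=\Delta^*$. For OC we use the pattern of Theorem~\ref{thm:oc-undecidable}: two high-level strings $w a$ and $w a'$ whose only possible common observation is the one coming from an accepting run.

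Finally we check $I$-boundedness. Between consecutive $I$-events, each branch inserts at most the initial marker, one choice event, and the trailing $a$ or $e$. This gives blocks of length at most $3$, and Lemma~\ref{lem:ibounded-check} confirms boundedness. The DFA has size $O(|S|\cdot|\Delta|\cdot|S|)$ and is built in polynomial time.

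The main obstacle is the bookkeeping in the middle step. The sets must be designed so that universality of $M$ is equivalent to equality (for MOC) and to nonempty intersection (for OC). At the same time, no spurious observations may leak between branches, while the blocks stay of length at most $3$. I would first fix the OC variant with explicit disjoint markers, then verify that the same language also serves for MOC. If not, I would add one extra default observation to equalize the sets in the universal case.
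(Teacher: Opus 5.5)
There is a genuine gap. The middle step, which is the whole content of the reduction, is left as ``tune the branches'', and the one concrete choice you make fails because of prefix closure. Your default branch is $b''\,w\,a\,e$. Since $L$ is a generated language, its prefix $b''\,w\,a$ also lies in $L$, with $P(b''wa)=w$ and $Q(b''wa)=wa$. Hence $w\in\mathcal O_L(wa)$ for every $w\in\Delta^*$, independently of $M$. With the branches as you describe them, $\mathcal O_L(w)=\mathcal O_L(wa)=\{w,we\}$ for all $w$, so $L$ is always MOC (hence OC), and ``$w\in\mathcal O_L(wa)$ iff $w\in\mathcal L(M)$'' cannot be arranged. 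The fix is to put the observable marker first, i.e.\ use the default branch $b''\,w\,e\,a$. Then the only strings with high-level projection $wa$ are $b''wea$ and the accepting runs of the $b'$-branch followed by $a$. This gives $\mathcal O_L(w)=\{w,we\}$ and $\mathcal O_L(wa)=\{we\}\cup\{w\mid w\in\mathcal L(M)\}$. Proposition~\ref{prop:observation-set-characterizations} then yields MOC iff $\mathcal L(M)=\Delta^*$, with at most three consecutive events outside $I=\Delta$ (as in $b''ea$). This is essentially the paper's MOC reduction with the roles of the pairs permuted. There, the default strings $w$, $wo$, $wh$ realize $(w,w)$, $(wo,w)$, $(w,wh)$. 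The membership-dependent pair $(wo,wh)$ carries both markers, so no prefix of a default string can realize it; it is realized only by $\tau_i\,w\,o\,h$ with $w\in L_m(A_i)$.

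For OC you give no language at all. Your guiding idea is right: two high-level strings whose only common observation comes from an accepting run. But it must be instantiated and all remaining pairs checked. For example, take the prefix closure of $b_1\Delta^*e_1a_1\cup b_2\Delta^*e_2a_2\cup b'K\{a_1,a_2\}$. Then $\mathcal O_L(w)=\{w,we_1,we_2\}$ and $\mathcal O_L(wa_k)=\{we_k\}\cup\{w\mid w\in\mathcal L(M)\}$. The pairs $(w,wa_k)$ are witnessed by $we_k$, and $\mathcal O_L(wa_1)\cap\mathcal O_L(wa_2)\ne\emptyset$ iff $w\in\mathcal L(M)$. This mirrors the paper's construction with $o_1,o_2,c,h_1,h_2$. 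This language is never MOC, since $we_2\in\mathcal O_L(w)\setminus\mathcal O_L(wa_1)$. So reusing it for MOC works only if the equalizing observations are attached to the accepting-run branch; adding them to the default branches would destroy the OC gadget. Your handling of nondeterminism by one fresh silent choice event after each letter is fine. It is a valid alternative to the paper's reduction from union universality of DFAs, where a single initial $\tau_i$ selects the automaton. However, the binary encoding of choices you mention first does not keep $N=3$ once some state has three or more successors on a letter.
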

\begin{proof}
    We reduce from the problem whether, given DFAs $A_1,\dots,A_n$ over an alphabet $\Delta$, $\bigcup_{i=1}^{n}L_m(A_i)=\Delta^*$, which is PSPACE-complete~\cite{kozen1977lower}.

    For MOC, take fresh pairwise distinct events $o$, $h$, and $\tau_1,\dots,\tau_n$, and set
    \[
        \Sigma=\Delta\mathbin{\dot\cup}\{o,h\}\mathbin{\dot\cup}\{\tau_1,\dots,\tau_n\},~
        \Sigma_o=\Delta\mathbin{\dot\cup}\{o\},~
        \Sigma_{\mathrm{hi}}=\Delta\mathbin{\dot\cup}\{h\},
    \]
    so that $I=\Delta$ and the events $\tau_i$ lie in neither $\Sigma_o$ nor $\Sigma_{\mathrm{hi}}$ and are therefore erased by both projections. Let
    \[
        L=\Delta^*\cup\Delta^*o\cup\Delta^*h\cup\bigcup_{i=1}^{n}\tau_i\bigl(\Delta^*\cup L_m(A_i)o\cup L_m(A_i)oh\bigr).
    \]
    This language is prefix-closed, and a DFA for it is obtained from a state $m$ carrying a self-loop on every event of $\Delta$, an $o$-transition and an $h$-transition from $m$ to two states without successors, a $\tau_i$-transition from $m$ to the initial state of a copy of $A_i$, an $o$-transition from every marked state of that copy to a state $r_i$, and an $h$-transition from $r_i$ to a state without successors. Since every $A_i$ is complete, that is, $L(A_i)=\Delta^*$, $\tau_i \Delta^*$ is obtained for free. The DFA has $\sum_i|A_i|+O(n)$ states and is computed in polynomial time.

    Since $o\notin\Sigma_{\mathrm{hi}}$, $h\notin\Sigma_o$, and the $\tau_i$ are erased, we obtain $P(L)=\Delta^*\cup\Delta^*o$ and $Q(L)=\Delta^*\cup\Delta^*h$, and $Q_o$ and $P_{\mathrm{hi}}$ erase $o$ and $h$, respectively. Hence
    \[
        \mathcal C_L=\{(w,w),(w,wh),(wo,w),(wo,wh)\mid w\in\Delta^*\} \,.
    \]
    The strings $w$, $wo$, and $wh$ of $L$ realize the first three families of pairs. A string of $L$ whose observation contains $o$ and whose high-level projection contains $h$ must be of the form $\tau_i w o h$ with $w\in L_m(A_i)$, and it realizes $(wo,wh)$. By Proposition~\ref{prop:moc-relational},
    \[
        L\ \text{is MOC}\iff \mathcal C_L = \mathcal R_L \iff \bigcup\nolimits_{i}L_m(A_i)=\Delta^* \,.
    \]
    Finally, the events outside $I=\Delta$ are $o$, $h$, and the $\tau_i$, and no string of $L$ contains more than three of them consecutively, as in $\tau_ioh$, so $L$ is $I$-bounded with $N=3$.

    For OC, take fresh pairwise distinct events $o_1,o_2,c,h_1,h_2,\tau_1,\dots,\tau_n$, and set $\Sigma_o=\Delta\mathbin{\dot\cup}\{o_1,o_2,c\}$ and $\Sigma_{\mathrm{hi}}=\Delta\mathbin{\dot\cup}\{h_1,h_2\}$, again with $I=\Delta$ and with the $\tau_i$ erased by both projections. Let $L$ be the prefix closure of
    \[
        \Delta^*o_1h_1\cup\Delta^*o_2h_2\cup\bigcup_{i=1}^{n}\tau_iL_m(A_i)c\,\{h_1,h_2\} \,,
    \]
    which is again realized by a DFA of polynomial size. Then $Q(L)=\Delta^*\cup\Delta^*\{h_1,h_2\}$ and $P_{\mathrm{hi}}$ erases $h_1$ and $h_2$. Two strings of $Q(L)$ have the same $P_{\mathrm{hi}}$-projection if and only if each of them is $w$ or $wh_k$ for $w\in\Delta^*$, thus the pairs to be considered are $(w,w)$, $(w,wh_k)$, $(wh_k,wh_k)$, and $(wh_1,wh_2)$.

    A string of $L$ with high-level projection $wh_k$ is either $wo_kh_k$ or of the form $\tau_iwch_k$ with $w\in L_m(A_i)$, and hence
    \[
        \mathcal O_L(wh_k)=
        \begin{cases}
            \{wo_k,wc\}, & \text{if } w\in\bigcup\nolimits_i L_m(A_i),\\
            \{wo_k\}, & \text{otherwise},
        \end{cases}
    \]
    for $k=1,2$, whereas $\mathcal O_L(w)$ contains both $wo_1$ and $wo_2$, as witnessed by the strings $wo_1$ and $wo_2$ of $L$. The first three kinds of pairs are satisfied by the observation $wo_k$, which is in $\mathcal O_L(w)$ and in $\mathcal O_L(wh_k)$. For the pairs $(wh_1,wh_2)$, the sets $\mathcal O_L(wh_1)$ and $\mathcal O_L(wh_2)$ intersect if and only if $w\in\bigcup_i L_m(A_i)$, since $o_1\ne o_2$ and $wc$ belongs to both in that case. By Proposition~\ref{prop:observation-set-characterizations}, $L$ is OC if and only if $\bigcup_i L_m(A_i) = \Delta^*$. Again, no string of $L$ contains more than three consecutive events outside $\Delta$, as in $\tau_ich_1$, and hence $N=3$.
\end{proof}

    Consequently, both problems are PSPACE-complete. 

\section{Conclusion}\label{sec:conclusion}
    Verification of OC and MOC is undecidable for prefix-closed regular languages represented by DFAs. On the positive side, restricting the plant yields a decidable class. If the automaton has no cycle labeled by events outside the shared alphabet, the two projections synchronize after a bounded number of events, both induced relations become regular over a finite alphabet of blocks, and OC and MOC become PSPACE-complete. The source of undecidability is thus not the alphabet size but the possibility of unbounded desynchronization between the two projections.

\section*{Acknowledgment}
    Claude (Anthropic) and Codex (OpenAI) independently suggested the use of rational relations in Theorems~\ref{thm:moc-undecidable} and~\ref{thm:oc-undecidable}. The authors take full responsibility for all statements and proofs.

\bibliographystyle{IEEEtran}
\bibliography{references}

\end{document}